\newtheorem{defn}{Definition}
\newtheorem{prob}{Problem}
\newtheorem{thrm}{Theorem}
\newtheorem{lemm}{Lemma}
\newtheorem{coro}{Corollary}
\newtheorem{remark}{Remark}
\title{A novel framework for Shot number minimization in Quantum Variational Algorithms}
\author[1,2]{Seyed Sajad Kahani \thanks{sajad@alqemie.co.uk}}
\author[1,3]{Amin Nobakhti  \thanks{amin@alqemie.co.uk}}
\affil[1]{Alqemie Ltd.}
\affil[2]{Department of Physics and Astronomy, University College London}
\affil[3]{Department of Electrical Engineering, Sharif University of Technology}
\date{July 8, 2023}
\begin{document}

\maketitle

\begin{abstract}
Variational Quantum Algorithms (VQAs) have gained significant attention as a potential solution for various quantum computing applications in the near term. However, implementing these algorithms on quantum devices often necessitates a substantial number of measurements, resulting in time-consuming and resource-intensive processes. This paper presents a generalized framework for optimization algorithms aiming to reduce the number of shot evaluations in VQAs. The proposed framework combines an estimator and an optimizer. We investigate two specific case studies within this framework. In the first case, we pair a sample mean estimator with a simulated annealing optimizer, while in the second case, we combine a recursive estimator with a gradient descent optimizer. In both instances, we demonstrate that our proposed approach yields notable performance enhancements compared to conventional methods.
\end{abstract}

\section{Introduction}

Variational Quantum Algorithms \cite{cerezo2021} have emerged as a promising solution for near-term applications of quantum computers. These versatile algorithms offer the capability to tackle a diverse range of complex problems, including but not limited to quantum chemistry \cite{peruzzo2014}, combinatorial optimization \cite{farhi2014}, and machine learning \cite{schuld2014}. Despite their potential for near-term applications, variational algorithms often require a large number of measurements. This makes implementation of those algorithms on quantum devices extremely time and resource-intensive \cite{harrigan2021, gaqac2020}, even when performed on shallow and low-width circuits.

Various research efforts have sought to employ optimizers to reduce the computational burden of VQAs. These include application of both existing and novel optimization techniques \cite{lavrijsen2020, pellowjarman2021, lockwood2022}. Such approaches are related to well studied and rich literature on optimization of noisy functions in various fields such as signal processing and control theory (see for example \cite{kelley2011} and \cite{chen2002}). Sweke et al.\cite{sweke2020} introduced a quantum stochastic gradient descent optimizer that relies on a gradient estimator with a limited number of shots. They proved that with some simplifying assumptions this approach will converge to the optimal values. However, the convergence rate is dependent on the error of the estimator. In another study, Polloreno et al.\cite{polloreno2022} studied the robustness of a double simulated annealing optimizer against inherent quantum noise, even when only a few shots are available and the noise is noticeable.

Another approach to solve this problem has been to employ a nested optimization framework in which a high-level optimizer is used to improve the performance of a low-level optimizer by tuning its parameters. For example, Tamiya et al.\cite{tamiya2022} employed Bayesian optimization on stochastic measurement results to determine the optimal step size through a line search. Inspired by stochastic gradient descent, this method incorporates an adaptive shot technique to reduce the number of measurements required during the line search. Similarly, Mueller et al.\cite{mueller2022} proposed a technique to identify a suitable initial value set using Gaussian Processes. Subsequently, they utilized ImFil as the optimizer in their approach.

In this work we propose a generalized framework for optimization algorithms which seek to reduce shot-number evaluations in VQAs. The key performance improving novelty in our approach are two fold.
First, devising a framework to incorporate powerful estimation techniques to achieve near-true parameter estimates with much fewer data samples.
Secondly, by utilizing the sensitivity analysis of the optimizers, it will be assured that the error level of estimators (and the number of shots as a result) are suitably chosen. This is made possible by breaking the problem into two separate estimation and optimization problems, and deriving theoretical results on the sufficient number of shot. We explore two specific case studies within this framework. For the first case, a sample mean estimator is paired with a simulated annealing optimizer, and in the second case, a recursive estimator is paired with a gradient descent optimizer.

The remainder of the paper is organized as follows; In section \ref{sec:concepts} background material, including quantum variational circuits, and estimation theory are presented. In section \ref{sec:discussion} we develop the proposed error control strategy and discuss the resulting optimization framework. In section \ref{sec:case_studies} we present two case studies together with numerical results. Finally, in section \ref{sec:conclusion}, we conclude our work.

\section{Basic Concepts}\label{sec:concepts}
\subsection{Quantum Variational Algorithms}

\def\C{\mathcal{C}}
\def\R{\mathbb{R}}

In theory of quantum variational algorithms, the expected value of an observable $O$ over a state, generated by applying the parameterized quantum circuit $U(\vb*{\theta})$ on the initial state $\ket{0}$ is a required data. This value is used by cost function $\C \in \R^m$ to be minimized with respect to the parameter space $\vb*{\theta}$. Accordingly, the class of algorithms such as VQE, QAOA and QNN, can be formulated as~\cite{cerezo2021},
\begin{equation}
  \label{eq:core_optimization}
  \vb*{\theta}^* = \min_{\vb*{\theta} \in \R^m} \C\qty( \ev{U(\vb*{\theta})^\dagger O U(\vb*{\theta})}{0} ).
\end{equation}

Specific details of these algorithms are available in \cite{cerezo2021}. Here we would like to focus on the underlying operation of these algorithms. Let,

\begin{equation}
    f^{U, O}(\vb*{\theta}) = \ev{U(\vb*{\theta})^\dagger O U(\vb*{\theta})}{0},
\end{equation}

in which $U$ and $O$ may be omitted when discussion is not related to the specific choice of $U$ and $O$. One of the simplest and widely used parameter-shift rules to compute the derivatives of $f$ is given in Lemma \ref{lemm:shift_rule}.

\def\ui{\mathrm{i}}
\begin{lemm}[Parameter-shift rule \cite{wierichs2022}]
\label{lemm:shift_rule}
under the circumstance that each the dependence of $f$ to each parameter (like $\vb*{\theta}_k$) is in the form of $e^{\ui \vb*{\theta}_k P_k}$ where $P_k$ is a Pauli operator, we have,
\begin{equation}
  \label{eq:shift_rule}
  \partial_k f(\vb*{\theta}) = \frac{f(\vb*{\theta} + \vu e_k \pi / 2) - f(\vb*{\theta} - \vu e_k \pi / 2)}{2}.
\end{equation}
\end{lemm}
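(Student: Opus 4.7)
The plan is to isolate the dependence on the single parameter $\vb*{\theta}_k$ and exploit the fact that a Pauli generator has only two distinct eigenvalues, so the function $f$ restricted to $\vb*{\theta}_k$ is a simple trigonometric polynomial of low degree. Since everything to the left and right of the factor $e^{\ui \vb*{\theta}_k P_k}$ in $U(\vb*{\theta})$ is independent of $\vb*{\theta}_k$, I would first write
\begin{equation*}
U(\vb*{\theta}) = V(\vb*{\theta}) \, e^{\ui \vb*{\theta}_k P_k} \, W(\vb*{\theta}),
\end{equation*}
introduce $\ket{\psi} = W(\vb*{\theta})\ket{0}$ and $\tilde O = V(\vb*{\theta})^\dagger O V(\vb*{\theta})$, and reduce the problem to understanding $g(\vb*{\theta}_k) := \ev{e^{-\ui \vb*{\theta}_k P_k} \tilde O e^{\ui \vb*{\theta}_k P_k}}{\psi}$ as a function of a single real variable.

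The next step is to invoke $P_k^2 = I$, which for any Pauli operator gives the Euler-type identity $e^{\ui \vb*{\theta}_k P_k} = \cos(\vb*{\theta}_k)\, I + \ui \sin(\vb*{\theta}_k) P_k$. Substituting this on both sides and expanding the product, the function $g$ collapses into four terms proportional to $\cos^2$, $\sin^2$, and $\sin\cos$. Collecting and applying the double-angle identities, I would show that $g$ has the form
\begin{equation*}
g(\vb*{\theta}_k) = A + B\cos(2\vb*{\theta}_k) + C\sin(2\vb*{\theta}_k),
\end{equation*}
where the constants $A$, $B$, and $C$ depend only on the expectations of $\tilde O$, $P_k \tilde O P_k$, and the commutator $[\tilde O, P_k]$ in the state $\ket{\psi}$.

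Given this closed form, the remainder is pure trigonometry. Differentiating gives $\partial_k g = -2B\sin(2\vb*{\theta}_k) + 2C\cos(2\vb*{\theta}_k)$, and evaluating $g$ at the shifted arguments prescribed by \eqref{eq:shift_rule} using the addition formulas isolates exactly the same combination of $\sin(2\vb*{\theta}_k)$ and $\cos(2\vb*{\theta}_k)$ on the right-hand side, matched by the $\tfrac{1}{2}$ normalization. Since $V$ and $W$ do not depend on $\vb*{\theta}_k$, this identity for $g$ lifts directly to the identity for $f$ claimed in the lemma.

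The main subtlety, and the step I would check most carefully, is the book-keeping of constants: depending on whether one parameterizes the gate as $e^{\ui \vb*{\theta}_k P_k}$ or $e^{\ui \vb*{\theta}_k P_k/2}$, the effective frequency doubles and the shift amount and prefactor have to be adjusted accordingly so that the right-hand side of \eqref{eq:shift_rule} reproduces $\partial_k f$ exactly rather than up to a scalar. Once the convention is fixed, the rest of the argument is a short and entirely mechanical computation; the real content is the finite-dimensional spectral reduction enabled by $P_k^2 = I$.
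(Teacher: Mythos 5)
Your overall strategy (isolate the $\vb*{\theta}_k$-dependence, use $P_k^2=I$ to reduce $f$ to a sinusoid in $\vb*{\theta}_k$, then finish with trigonometric identities) is the standard route to this result — the paper itself offers no proof, deferring entirely to the cited reference, so this is the right kind of argument to supply. However, as written your proof contains a genuine error in the final step, and it sits exactly at the point you flagged as a "subtlety to check" but never resolved. Under the convention you actually adopt, $e^{\mathrm{i}\vb*{\theta}_k P_k}$ with $P_k$ a full Pauli operator, your own computation correctly gives $g(\vb*{\theta}_k)=A+B\cos(2\vb*{\theta}_k)+C\sin(2\vb*{\theta}_k)$, i.e.\ a frequency-$2$ sinusoid. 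But then $g(\vb*{\theta}_k+\pi/2)$ and $g(\vb*{\theta}_k-\pi/2)$ shift the argument $2\vb*{\theta}_k$ by $+\pi$ and $-\pi$ respectively, so both equal $A-B\cos(2\vb*{\theta}_k)-C\sin(2\vb*{\theta}_k)$ and their difference is identically zero — it cannot equal $2\,\partial_k g = -4B\sin(2\vb*{\theta}_k)+4C\cos(2\vb*{\theta}_k)$. The claim that the $\pi/2$ shifts "isolate exactly the same combination, matched by the $\tfrac12$ normalization" is therefore false for the form you derived; for frequency $2$ the correct rule uses shifts of $\pi/4$ with prefactor $1$.

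The fix is to commit to the frequency-$1$ convention: if the gate is $e^{-\mathrm{i}\vb*{\theta}_k P_k/2}$ (e.g.\ $R_x(\theta)$, which is what the paper actually uses in its benchmark Problem~2), then $g(\vb*{\theta}_k)=A+B\cos\vb*{\theta}_k+C\sin\vb*{\theta}_k$, and the identity $g(\vb*{\theta}_k+\pi/2)-g(\vb*{\theta}_k-\pi/2)=2\bigl(-B\sin\vb*{\theta}_k+C\cos\vb*{\theta}_k\bigr)=2\,\partial_k g$ gives exactly the stated rule. (The lemma's hypothesis as literally printed, with generator $P_k$ rather than $P_k/2$, is itself inconsistent with the $\pi/2$-shift, $\tfrac12$-prefactor formula, so your derivation exposes a slip in the paper's statement — but a correct proof must either change the convention or change the shift and prefactor; it cannot assert the printed formula from the frequency-$2$ form.) Deferring this as mere "book-keeping of constants" leaves the central identity unproven, since the constant you must track is precisely what decides whether the claimed right-hand side is the derivative or identically zero.
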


Variable $\partial_k$ is $\pdv{\theta_k}$ and $\vu e_k$ is the vector with $1$ in the $k$-th position and $0$ elsewhere. Lemma \ref{lemm:shift_rule} is not only useful in calculating the derivative of $f$, it can also be used to bound higher derivatives of $f$ as shown in Lemma \ref{lemm:derivative_bound}.

\begin{lemm}
    \label{lemm:derivative_bound}
    For any $\vb*{\theta} \in \R^m$, we have,
    \begin{equation}
        \norm{\mathrm{Hess} f}_2 \le m\norm{O}_{2}.
    \end{equation}
\begin{proof}
    From the definition we know that $\abs{f} < \norm{O}_{2} \forall \vb*{\theta} \in \R^m$.
    For any $i$ and $j$ there always exist some values of $\vb*{\theta}_{1}, \vb*{\theta}_{2}, \vb*{\theta}_{3}, \vb*{\theta}_{4}$ for which,
    \begin{equation}
        \mathrm{Hess} f_{ij} = \frac{f(\vb*{\theta_{1}}) - f(\vb*{\theta}_{2}) - f(\vb*{\theta}_{3}) + f(\vb*{\theta}_{4})}{4} \le \norm{O}_{2}.
    \end{equation}
    Accordingly,
    \begin{equation}
        \norm{\mathrm{Hess} f}_2 \le m\norm{O}_{2}.
    \end{equation}
\end{proof}
\end{lemm}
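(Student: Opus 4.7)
The plan is to reduce the spectral norm bound on the Hessian to an entry-wise bound, and to get the latter from a double application of the parameter-shift rule.

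First I would record the baseline bound $|f(\vb*{\theta})| \le \norm{O}_{2}$ for every $\vb*{\theta} \in \R^m$. This is immediate because $f(\vb*{\theta})$ is the expectation value of $O$ in the unit-norm state $U(\vb*{\theta})\ket{0}$, so it is bounded in absolute value by the operator norm $\norm{O}_{2}$.

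Second, I would apply Lemma \ref{lemm:shift_rule} twice, once for coordinate $i$ and then for coordinate $j$, to express the $(i,j)$-entry of the Hessian as
\begin{equation}
    (\mathrm{Hess} f)_{ij} = \partial_i \partial_j f(\vb*{\theta}) = \frac{f(\vb*{\theta}_{1}) - f(\vb*{\theta}_{2}) - f(\vb*{\theta}_{3}) + f(\vb*{\theta}_{4})}{4},
\end{equation}
where the four shifted arguments $\vb*{\theta}_{1},\dots,\vb*{\theta}_{4}$ are obtained by independently shifting the $i$-th and $j$-th coordinates by $\pm\pi/2$. The triangle inequality combined with the baseline bound then yields $|(\mathrm{Hess} f)_{ij}| \le \norm{O}_{2}$.

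Finally, I would promote this entry-wise bound to a bound on the spectral norm by passing through the Frobenius norm, using $\norm{M}_{2} \le \norm{M}_{F}$ for any matrix $M$ and summing $m^2$ entries, each of magnitude at most $\norm{O}_{2}$, to obtain $\norm{\mathrm{Hess} f}_{2} \le m\norm{O}_{2}$. The mildly nontrivial step is this last inequality: the author's written proof only asserts the entry-wise bound and then jumps to the claim, and the factor of $m$ (rather than $\sqrt{m}$) is loose but sufficient. Everything else is a routine substitution once the parameter-shift identity is applied componentwise.
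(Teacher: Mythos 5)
Your proposal takes essentially the same route as the paper: the four-point double parameter-shift identity yields the entrywise bound $\abs{(\mathrm{Hess} f)_{ij}} \le \norm{O}_2$, which is then promoted to the spectral norm, with your Frobenius-norm step $\norm{M}_2 \le \norm{M}_F \le m\norm{O}_2$ simply making explicit what the paper leaves implicit. One small aside: from an entrywise bound alone the factor $m$ is not actually improvable to $\sqrt{m}$ (the all-ones matrix has spectral norm $m$), but this remark does not affect the validity of your argument.
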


\subsection{Estimation and Error Analysis}\label{ssec:estimation}
\def\Var{\mathrm{Var}}
\def\MSE{\mathrm{MSE}}
\def\Bias{\mathrm{Bias}}

Contrary to the simple definition of $f^{U, O}$, evaluating such an expected value at each sample point may involve measurements with respect to $\ell$ multiple bases. Accordingly, the observable $O$ will be decomposed to $\ell$ observables, each of which is diagonal in a different basis, such as,
\begin{equation}
    \label{eq:decomposition}
    O = \sum_{j=1}^\ell V^\dagger_{j} D_j V_j.
\end{equation}

For each $\ell$, it is necessary to perform $r_j$ repetitive measurements on a quantum circuit. The $l$th (out of $r_j$) measurement outcome will be considered as a sample from a random variable $\chi_{j, l} \sim X(UV_j, D_j, \vb*{\theta})$. We know that $\mathbb{E}[\chi_{j,l}] = f^{UV_j, D_j}(\vb*{\theta})$ and this is the reason we typically define an estimator $f^{U, O}(\vb*{\theta})$ as follows.

\begin{defn}[Sample Mean Estimator]
    \label{def:naive_estimator}
    A sample mean estimator for $f$ is defined as,
    \begin{equation}
        \hat{f}^{U, O}(\vb*{\theta}) = \sum_{j=1}^{\ell} \frac{1}{r_j} \sum_{l = 1}^{r_j} \chi_{j, l}.
    \end{equation}
    And for any of $\partial_k f$s,
    \begin{equation}
        \hat{\partial}_k f^{U, O}(\vb*{\theta}) = \sum_{j=1}^{\ell} \frac{1}{2r_{j+}} \sum_{l = 1}^{r_{j+}} \chi_{j+, l} - \frac{1}{2r_{j+}} \sum_{l = 1}^{r_{j-}} \chi_{j-, l}.
    \end{equation}
    where $\chi_{j+, l} \sim X(UV_j, D_j, \vb*{\theta} + \vu e_i \pi / 2)$ and $\chi_{j-, l} \sim X(UV_j, D_j, \vb*{\theta} - \vu e_i \pi / 2)$.
\end{defn}

The performance of such an estimator can be bounded with the aid of the Hoeffding's inequality. The inequality provides confidence intervals of the estimators of bounded random variables.

\def\E{\mathbb{E}}
\begin{lemm}[Hoeffding's inequality~\cite{hoeffding1963}]
    \label{lemm:hoeffding}
    For $n$ random variables $\xi_1, \xi_2, \dots, \xi_n$ with $a_i \le \xi_i \le b_i$ for all $i$, and any $t > 0$, we have,
    \begin{equation}
        \Pr\qty(\abs{\sum_{i=1}^n \xi_i - \sum_{i=1}^n \E[\xi_i]} \ge t) \le 2e^{\frac{-2t^2}{\sum_{i=1}^n (b_i - a_i)^2}}.
    \end{equation}
\end{lemm}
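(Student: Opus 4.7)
The plan is to use the classical Chernoff bounding technique combined with Hoeffding's auxiliary lemma. Assuming the $\xi_i$ are independent (the standard hypothesis for this inequality), I will first derive a one-sided tail bound on $S_n = \sum_{i=1}^n (\xi_i - \E[\xi_i])$, and then obtain the two-sided statement by applying the same argument to $-\xi_i$ and using a union bound, which accounts for the factor of $2$ in the stated inequality.

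For the one-sided bound, I would fix $s > 0$ and apply Markov's inequality to the nonnegative random variable $e^{s S_n}$:
\begin{equation}
\Pr(S_n \ge t) = \Pr(e^{s S_n} \ge e^{s t}) \le e^{-st}\, \E[e^{s S_n}].
\end{equation}
Independence of the $\xi_i$ allows factoring the moment generating function as $\E[e^{s S_n}] = \prod_{i=1}^n \E[e^{s(\xi_i - \E[\xi_i])}]$. The key step is then to control each factor via an auxiliary lemma (Hoeffding's lemma): for a mean-zero random variable $Y$ supported in $[a,b]$,
\begin{equation}
\E[e^{s Y}] \le \exp\!\left(\frac{s^2(b-a)^2}{8}\right).
\end{equation}
Combining these gives $\Pr(S_n \ge t) \le \exp\!\bigl(-st + \tfrac{s^2}{8}\sum_i (b_i-a_i)^2\bigr)$, and optimizing in $s$ (taking $s = 4t/\sum_i (b_i-a_i)^2$) yields the desired exponent $-2t^2/\sum_i (b_i-a_i)^2$.

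The main obstacle is establishing Hoeffding's lemma itself. I would prove it by exploiting convexity of $x \mapsto e^{sx}$: for any $y \in [a,b]$, writing $y = \lambda b + (1-\lambda) a$ with $\lambda = (y-a)/(b-a)$ gives $e^{sy} \le \tfrac{b-y}{b-a} e^{sa} + \tfrac{y-a}{b-a} e^{sb}$. Taking expectations and using $\E[Y]=0$ reduces the bound to controlling a function of the form $\varphi(h) = \log\bigl(p e^{-h p} \cdot \text{(rewritten form)}\bigr)$ after the substitution $h = s(b-a)$ and $p = -a/(b-a)$. A second-order Taylor expansion of $\varphi$ around $0$ together with the elementary estimate $\varphi''(h) \le 1/4$ (which follows from the inequality $u(1-u) \le 1/4$ applied to the probability-like quantity appearing in $\varphi''$) yields $\varphi(h) \le h^2/8$, which is exactly the claim.

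Once Hoeffding's lemma is in hand, the remainder of the argument is routine bookkeeping. The only subtlety worth flagging is that the stated lemma does not explicitly mention independence; I would note that the bound is applied in Section~\ref{ssec:estimation} to independent measurement outcomes $\chi_{j,l}$, so the hypothesis is satisfied in all intended uses.
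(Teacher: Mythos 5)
The paper does not prove this lemma at all: it is quoted directly from the cited reference~\cite{hoeffding1963}, so there is no internal proof to compare against. Your Chernoff-bound argument (Markov's inequality on $e^{sS_n}$, factoring the moment generating function by independence, Hoeffding's auxiliary lemma via convexity and the estimate $\varphi''\le 1/4$, then optimizing $s$) is the standard and correct proof of the cited result, and your remark that independence is tacitly assumed---and is satisfied by the measurement outcomes $\chi_{j,l}$ to which the paper applies the bound---is an accurate and useful clarification of a hypothesis the statement omits.
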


Based on this, the following bounds are obtained for the MSE (mean square error) and confidence interval (CI) of the sample mean estimator.

\begin{thrm}[Sample mean estimator bounds]
    \label{thrm:naive_bound}
    By defining,
    \begin{equation}
        \label{eq:shots_and_epsilon_f}
        \epsilon_f = \sum_{j=1}^{\ell} \frac{\norm{D_j}^2_2}{r_j},
    \end{equation}
    and,
    \begin{equation}
        \label{eq:shots_and_epsilon_df}
        \epsilon_{\partial_k f} = \sum_{j=1}^{\ell} \frac{\norm{D_j}^2_2}{4} \qty(\frac{1}{r_{j+}} + \frac{1}{r_{j-}}).
    \end{equation}
    When $\hat s$ is $\hat f^{U, O}$ or $\hat \partial_k f^{U, O}$, it can be respectively bounded by $\epsilon_f$ and $\epsilon_{\partial_k f}$ for any $\vb *{\theta}$ and $\kappa > 0$ as follows,
    \begin{equation}
        \MSE[ \hat{s}(\vb*{\theta})] \le \epsilon, \quad \Pr(\abs{\hat{s}(\vb*{\theta}) - s(\vb*{\theta})} > \kappa\sqrt{\epsilon}) \le 2e^{-\frac{\kappa^2}{2}}.
    \end{equation}
\end{thrm}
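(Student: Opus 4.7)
The plan is to exploit two structural facts about the measurement process: first, each $\chi_{j,l}$ is realized as an eigenvalue of the diagonal observable $D_j$, so it is almost surely bounded in $[-\|D_j\|_2,\|D_j\|_2]$; second, by construction the shots across different $j$ and different repetitions $l$ are independent. Together these two facts are enough to dispatch both the MSE bound and the Hoeffding-type confidence interval, for the function estimator and the derivative estimator alike.

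For the MSE bound, I would first note that $\hat{f}^{U,O}$ is unbiased because $\E[\chi_{j,l}] = f^{UV_j,D_j}(\vb*{\theta})$ and the decomposition \eqref{eq:decomposition} gives $f^{U,O} = \sum_j f^{UV_j,D_j}$, so $\MSE = \Var$. Each $\chi_{j,l}$ lies in an interval of length $2\|D_j\|_2$, hence $\Var(\chi_{j,l}) \le \|D_j\|_2^2$. Averaging $r_j$ independent copies gives a factor $1/r_j$, so summing over bases recovers $\Var(\hat{f}) \le \sum_j \|D_j\|_2^2/r_j = \epsilon_f$. The derivative case is identical except each basis contributes two independent groups of shots scaled by $1/2$, producing the extra $1/4$ and the $1/r_{j+}+1/r_{j-}$ pattern that defines $\epsilon_{\partial_k f}$.

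For the concentration bound, I would apply Lemma~\ref{lemm:hoeffding} to the rescaled variables $\xi_{j,l} = \chi_{j,l}/r_j$ (or $\pm\chi_{j\pm,l}/(2r_{j\pm})$ in the derivative case), whose ranges are intervals of length $2\|D_j\|_2/r_j$ (respectively $\|D_j\|_2/r_{j\pm}$). Summing squared lengths over all $(j,l)$ the factor $1/r_j^2$ collapses against the count $r_j$, yielding exactly $4\epsilon_f$ (and $4\epsilon_{\partial_k f}$ for the derivative). Hoeffding then gives $\Pr(|\hat{s}(\vb*{\theta}) - s(\vb*{\theta})| \ge t) \le 2\exp(-t^2/(2\epsilon))$, and substituting $t = \kappa\sqrt{\epsilon}$ yields the stated $2e^{-\kappa^2/2}$.

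I do not anticipate a real obstacle here; the argument is a clean application of independence plus Hoeffding. The only conceptual point worth stating carefully, rather than treating as routine, is the almost-sure bound $|\chi_{j,l}| \le \|D_j\|_2$: it is the diagonal form in \eqref{eq:decomposition} that guarantees a single-shot measurement outcome is an eigenvalue of $D_j$ and hence bounded by its spectral norm. Once this is on the table, both the $1/r_j$ scaling of the variance and the $r_j\cdot(1/r_j)^2$ cancellation inside the Hoeffding exponent are mechanical.
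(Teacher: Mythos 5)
Your proposal is correct and follows essentially the same route as the paper: the MSE bound comes from unbiasedness plus the variance bound for variables supported in $[-\norm{D_j}_2,\norm{D_j}_2]$ (the paper invokes Popoviciu's inequality for this step), and the confidence interval comes from Hoeffding's inequality applied to the rescaled variables $\chi_{j,l}/r_j$ (resp. $\chi_{j\pm,l}/(2r_{j\pm})$), with the sum of squared ranges collapsing to $4\epsilon$ and the substitution $t=\kappa\sqrt{\epsilon}$. Your write-up is in fact slightly more explicit than the paper's about unbiasedness and the $r_j\cdot(1/r_j)^2$ cancellation, but there is no substantive difference in approach.
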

\begin{proof}
    To prove the bounds for $f$, we start by setting $\xi$s in Hoeffding's inequality to $\frac{\chi_{j,l}}{r_j}$ for different $j$ and $l$s. They are bounded to $-\frac{\norm{D_j}}{r_j} \le \frac{\chi_{j,l}}{r_j} \le \frac{\norm{D_j}}{r_j}$, it can thus be shown that,
    \begin{equation}
        \Pr(\abs{\hat{f}(\vb*{\theta}) - f(\vb*{\theta})} > t) \le 2e^{-\frac{2t^2}{4\epsilon_f}}.
    \end{equation}
     It is now only required to replace $t$ with $\kappa \sqrt{\epsilon_f}$. From Popoviciu's inequality~\cite{popoviciu1935} it is evident that $\Var[\xi_i] \le \frac{b_i - a_i}{4}$ which is used for the MSE of bounded random variables. The same results hold for the partial derivatives, if we set $\xi$s to $\frac{\chi_{j\pm,l}}{2r_{j\pm}}$ for different $j$ and $l$ and $+$ and $-$ signs.
\end{proof}

\section{Main Results}\label{sec:discussion}
\subsection{Error Control Strategy}

As mentioned in the introduction, a key performance improving novelty of our work is the means to control the error level, as well as the number of shots. This will be possible by connecting the number of shots to the error level of any estimator, using the problem below. Contrary to the normal estimators that often use a constant number of shots without any further analysis, we intend to find a sufficient value for $r_j$s such that the resulting estimation error is bounded by a specified amount.

\begin{prob}[Sufficient Number of Shots]
    \label{prob:sufficient_shots}
    Given an estimator $\hat s$, find the values of $r_j$s which satisfy the following constraints,
    \begin{equation}
        \label{eq:constraint_mse_f}
        \MSE[\hat{s}] \le E_{s}.
    \end{equation}
\end{prob}

For the sample mean estimator discussed previously, solving Problem \ref{prob:sufficient_shots}, for $f^{U, O}$ and $\partial_k f^{U, O}$ is equivalent to the following optimisation problems,

\begin{equation}
\label{eq:EBE_f}
    \underset{r_{j}\in \mathbb{N}}{\mathrm{argmin}}  \sum_{j=1}^{\ell} r_j \quad \mathrm{s.\ t.} \quad \MSE[\hat{f}] \le E_f.
\end{equation}
\begin{equation}
\label{eq:EBE_df}
    \underset{r_{j\pm}\in \mathbb{N}}{\mathrm{argmin}} \sum_{j=1}^{\ell} r_{j+} + r_{j-} \quad \mathrm{s.\ t.} \quad \MSE[\hat{\partial}_k f] \le E_{\partial_k f}.
\end{equation}

Optimization problems \ref{eq:EBE_f} and \ref{eq:EBE_df} can be approximately solved using Algorithm~\ref{algo:error_control}. This algorithm solves the optimisations by relaxing MSE values to the bounds $\epsilon_f$ and $\epsilon_{\partial_k f}$ defined in Theorem \ref{thrm:naive_bound} and limiting $r_j$s and $r_{j\pm}$s to have real values.

\begin{algorithm}
        a) Sufficient shots for $\hat{f}$, the function returns the outcoming bound for the error ($\epsilon_f$) as well as the number of shots ($r_j$s).
    \begin{algorithmic}
        \Function{ShotsForSMEstimatorF}{$E_f$}

        \State Decompose $f$ to $\ell$ terms (as \ref{eq:decomposition})
        \State $\nu \gets \qty(\sum_{j=1}^\ell \norm{D_j}_2)/E_f$
        \For {$j=1$ to $l$}
            \State $r_j \gets \lceil\qty(\norm{D_j}_2)/\nu\rceil$
        \EndFor
        \State $\epsilon_f \gets \sum_{j=1}^\ell \norm{D_j}^2_2/r_j$
        \State \Return ($r_j$s, $\epsilon_f$)
        \EndFunction
    \end{algorithmic}
        b) Sufficient shots for $\hat{\partial}_k f$ that returns the similar outputs.
    \begin{algorithmic}
        \Function{ShotsForSMEstimatorDF}{$E_{\partial_k f}$}
        \State $\nu \gets 2\qty(\sum_{j=1}^\ell \norm{D_j}_2)/E_f$
        \For {$j=1$ to $l$}
        \For {$\sigma$ in $\{+, -\}$}
            \State $r_{j\sigma} \gets \lceil\qty(\norm{D_j}_2)/\nu\rceil$
        \EndFor
        \EndFor
        \State $\epsilon_{\partial_k f} \gets \sum_{j=1}^\ell \norm{D_j}^2_2/4\qty(1/r_{j+} + 1/r_{j-})$
        \State \Return $\epsilon_{\partial_k f}$, $r_{j\pm}$s
        \EndFunction
    \end{algorithmic}
    \caption{Error control of sample mean estimators}
    \label{algo:error_control}
\end{algorithm}

We can easily verify the algorithm by replacing the values using the formulas in Theorem~\ref{thrm:naive_bound} and deduce that the algorithm not only bounds the MSE but also provides a CI for the values.

\subsection{Optimizing Agent}

Regardless of technical detail, the function of all variational algorithms can be considered as that of agent which interacts with a quantum computer as shown in Figure~\ref{fig:optimizer}. Such a high level conceptualization permits development of a unified framework for the evaluation of $f$, $\partial_k f$ and higher derivatives.

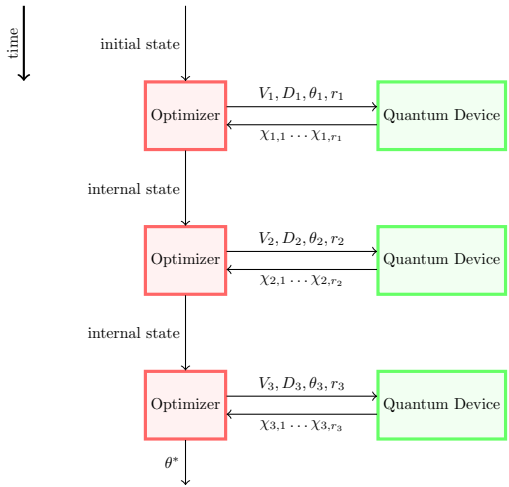
\begin{figure}[h]
    \centering
    \begin{tikzpicture}[
        scale=0.6, every node/.style={scale=0.6},
        align=center,node distance=1cm and 2cm,
        squared/.style={rectangle, minimum height=1.5cm, align=center},
        opt/.style={draw=red!60, fill=red!5, very thick},
        ctr/.style={draw=blue!60, fill=blue!5, very thick},
        qnt/.style={draw=green!60, fill=green!5, very thick},
        ]
        \node (o0) {};
        \node (t0) [left=of o0] {};
        \node (t1) [below=of t0] {};

        \draw[->, thick] (t0.south) -- (t1.north) node[midway,rotate=90,above] {time};
        \node[squared, opt] (o1) [below=of o0] {Optimizer};
        \node[squared, qnt] (q1) [right=2.0cm of o1] {Quantum Device};

        \node[squared, opt] (o2) [below=of o1] {Optimizer};
        \node[squared, qnt] (q2) [below=of q1] {Quantum Device};

        \node[squared, opt] (o3) [below=of o2] {Optimizer};
        \node[squared, qnt] (q3) [below=of q2] {Quantum Device};

        \draw[->] (o0.south) -- (o1.north) node[midway,left] {initial state};

        \draw[->] ([yshift=2mm]o1.east)  -- ([yshift=2mm]q1.west) node[midway,above] {$V_1, D_1, \theta_1, r_1$};
        \draw[->] ([yshift=-2mm]q1.west)  -- ([yshift=-2mm]o1.east) node[midway,below] {$\chi_{1, 1} \dots \chi_{1, r_1}$};

        \draw[->] (o1.south) -- (o2.north) node[midway,left] {internal state};

        \draw[->] ([yshift=2mm]o2.east)  -- ([yshift=2mm]q2.west) node[midway,above] {$V_2, D_2, \theta_2, r_2$};
        \draw[->] ([yshift=-2mm]q2.west)  -- ([yshift=-2mm]o2.east) node[midway,below] {$\chi_{2, 1} \dots \chi_{2, r_2}$};

        \draw[->] (o2.south) -- (o3.north) node[midway,left] {internal state};

        \draw[->] ([yshift=2mm]o3.east)  -- ([yshift=2mm]q3.west) node[midway,above] {$V_3, D_3, \theta_3, r_3$};
        \draw[->] ([yshift=-2mm]q3.west)  -- ([yshift=-2mm]o3.east) node[midway,below] {$\chi_{3, 1} \dots \chi_{3, r_3}$};

        \draw[->] (o3.south) -- ++(0,-1.0) node[midway,left] {$\theta^*$};
    \end{tikzpicture}
    \caption{The basic schematic of an optimizer where $\chi_{j, l} \sim X(V_j, D_j, \vb*{\theta_j})$}
    \label{fig:optimizer}
\end{figure}

Most general purpose optimizers will not aim to control the number of shots which is often taken as a constant during the optimization. There have been attempts to develop adaptive algorithms such as \cite{tamiya2022} but the scope of their application is limited. Any optimizing agent will ultimately utilize available data by calculating a set of estimators. Statistically, it is possible to reduce the number of estimators to a sufficient set of estimators. For most typical optimizer, those estimates will be limited to $\hat f^{U, O}(\theta_i)$ and $\hat{\partial}_k f^{U, O}(\theta_i)$, where $f^{U, O}$ is the function that is being optimized.

However, by application of sufficient shot problem proposed earlier, it is possible to control the optimization error, instead of the number of shots. In our view this is a more natural way of looking at the problem. In such an improved strategy, the optimizer is provided with the errors $E_{f}$ and $E_{\partial_k f}$ instead of $r_j$, and solves for $\hat{f}$, $\hat{\partial}_k f$ instead of $\chi_{j, l}$. This is illustrated in Figure~\ref{fig:interactive_optimizer_with_estimators}.

For the sake of simplicity we shall henceforth refer to $f^{U, O}(\theta_i)$ and $\partial_k f^{U, O}(\theta_i)$ as $f_i$ and $\partial_k f_i$ respectively. Moreover, this strategy can also be extended to the sample mean estimator $\hat{f}_i$ and $\hat{\partial_k}f_i$, defined in Definition~\ref{def:naive_estimator}.

\begin{figure}[h]
    \centering
    \begin{tikzpicture}[
        scale=0.6, every node/.style={scale=0.6},
        align=center,node distance=1cm and 2cm,
        squared/.style={rectangle, minimum height=1.5cm, align=center},
        opt/.style={draw=red!60, fill=red!5, very thick},
        ctr/.style={draw=blue!60, fill=blue!5, very thick},
        qnt/.style={draw=green!60, fill=green!5, very thick},
        ]
        \node (o0) {};
        \node (t0) [left=of o0] {};
        \node (t1) [below=of t0] {};

        \draw[->, thick] (t0.south) -- (t1.north) node[midway,rotate=90,above] {time};

        \node[squared, opt] (o1) [below=of o0] {Optimizer w. \\ sensitivity analysis};
        \node[squared, ctr] (c1) [right=1.4cm of o1] {Estimator w. \\ sufficient shots};
        \node[squared, qnt] (q1) [right=2.0cm of c1] {Quantum Device};

        \node[squared, opt] (o2) [below=of o1] {Optimizer w. \\ sensitivity analysis};
        \node[squared, ctr] (c2) [below=of c1] {Estimator w. \\ sufficient shots};
        \node[squared, qnt] (q2) [below=of q1] {Quantum Device};

        \node[squared, opt] (o3) [below=of o2] {Optimizer w. \\ sensitivity analysis};
        \node[squared, ctr] (c3) [below=of c2] {Estimator w. \\ sufficient shots};
        \node[squared, qnt] (q3) [below=of q2] {Quantum Device};
        \draw[->] (o0.south) -- (o1.north) node[midway,left] {inital state};

        \draw[->] ([yshift=4mm]o1.east)  -- ([yshift=4mm]c1.west) node[midway,above] {$\theta_1$, $E_{f_1}$, $E_{\partial_k, f_1}$};
        \draw[->] ([yshift=-4mm]c1.west)  -- ([yshift=-4mm]o1.east) node[midway,below] {$\hat{f}(\theta_1)$, $\hat{\grad}f(\theta_1)$};
        \draw[->] ([yshift=2mm]c1.east) -- ([yshift=2mm]q1.west) node[midway,above] {$r_j$s, $r_{j\pm}$s};
        \draw[->] ([yshift=-2mm]q1.west) -- ([yshift=-2mm]c1.east) node[midway,below] {$\chi_{j,l}$s, $\chi_{j\pm, l}$s};

        \draw[->] (o1.south) -- (o2.north) node[midway,left] {internal state};
        \draw[->, thin, gray] (c1.south) -- (c2.north) node[midway,left,align=right,gray] {internal state};
        \draw[->] ([yshift=2mm]o2.east)  -- ([yshift=2mm]c2.west) node[midway,above] {$\theta_2$};
        \draw[->] ([yshift=-2mm]c2.west)  -- ([yshift=-2mm]o2.east) node[midway,below] {$\hat{f}(\theta_2)$, $\hat{\grad}f(\theta_2)$};
        \draw[->] ([yshift=2mm]c2.east) -- ([yshift=2mm]q2.west) node[midway,above] {$r_j$s, $r_{j\pm}$s};
        \draw[->] ([yshift=-2mm]q2.west) -- ([yshift=-2mm]c2.east) node[midway,below] {$\chi_{j,l}$s, $\chi_{j\pm, l}$s};

        \draw[->] (o2.south) -- (o3.north) node[midway,left] {internal state};
        \draw[->, thin, gray] (c2.south) -- (c3.north) node[midway,left,align=right,gray] {internal state};
        \draw[->] ([yshift=2mm]o3.east)  -- ([yshift=2mm]c3.west) node[midway,above] {$\theta_3$};
        \draw[->] ([yshift=-2mm]c3.west)  -- ([yshift=-2mm]o3.east) node[midway,below] {$\hat{f}(\theta_3)$, $\hat{\grad}f(\theta_3)$};
        \draw[->] ([yshift=2mm]c3.east) -- ([yshift=2mm]q3.west) node[midway,above] {$r_j$s, $r_{j\pm}$s};
        \draw[->] ([yshift=-2mm]q3.west) -- ([yshift=-2mm]c3.east) node[midway,below] {$\chi_{j,l}$s, $\chi_{j\pm, l}$s};

        \draw[->] (o3.south) -- ++(0,-1.0) node[midway,left] {$\theta^*$};

    \end{tikzpicture}
    \caption{Schematic diagram of an optimizer with sensitivity analysis and an estimator with a sufficient shot algorithm.}
    \label{fig:interactive_optimizer_with_estimators}
\end{figure}
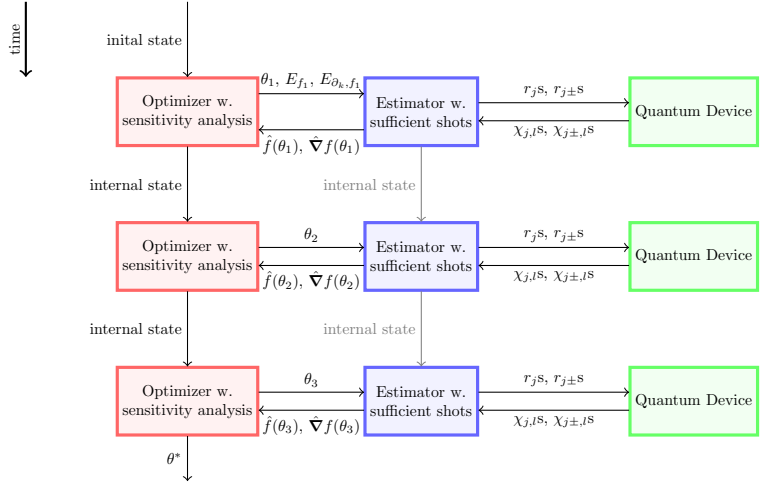

In the proposed framework the main problem is broken down into two separate problems. These are,
\begin{enumerate}
    \item An optimization problem of uncertain values, with a sensitivity analysis
    \item An estimation problem, with the question of sufficient shots for the estimator.
\end{enumerate}

In the proposed framework one is not limited to the sample mean estimator defined in Definition \ref{def:naive_estimator} and can make use of any static or dynamic estimator. Dynamic estimators will also have an internal states which is shown by a gray arrow in Figure \ref{fig:interactive_optimizer_with_estimators}.

We will demonstrate the profound effectiveness of this approach by introducing a few examples of estimators and optimizers in the following section. For the sake of illustrating the methodology we shall make use of existing standard and rather simple optimization and estimation techniques. Evidently the eventual obtainable performance improvements can be much greater by a well matched and individually powerful optimizer and estimator.

\section{Case Studies}\label{sec:case_studies}
\subsection{Example I: Error-Aware Simulated Annealing}
A simple simulated annealing algorithm is a stochastic process that starts from a random point in the search space and iteratively moves to a new point with a transition probability $P$ based on the values and temperature $T_i$ at step $i$. In order to introduce the uncertainty, we only need to redefine the transition probability $\hat P$ based on the estimator as follows,

\begin{equation}
    \hat P(\vb*{\theta}_{i+1} | \vb*{\theta}_i) = \begin{cases}
        1 & \text{if } \hat{f}_{i+1} < \hat{f}_i \\
        e^{-\frac{\hat{f}_{i+1} - \hat{f}_i}{T_i}} & \text{otherwise}.
    \end{cases}
\end{equation}

Then, the sensitivity can be analyzed as follows. In order to maintain an accuracy for $\hat P(\vb*{\theta}_{i+1} | \vb*{\theta}_i)$ we seek,

\begin{equation}
    \E\qty[D_{KL}(P \parallel \hat P)] \le \eta,
\end{equation}

where $D_{KL}$ is the Kullback-Leibler divergence. We know that this equation will hold if,

\begin{equation}
    \E\qty[\abs{\log \frac{P(\vb*{\theta}_{i+1} | \vb*{\theta}_i)}{\hat P(\vb*{\theta}_{i+1} | \vb*{\theta}_i)}}] \le \eta \qquad \forall \vb*{\theta}_{i+1}.
\end{equation}

The RHS could be bounded using $\E[\abs{x - \E[x]}] \le \sqrt{\Var[x]}$ and the independence of $\hat{f}_{i+1}$ and $\hat{f}_i$ and by assuming a monotonically decreasing temperature $T_{i+1} < T_{i}$,
\begin{equation}
    \begin{aligned}
    \E\qty[\abs{ \log P(\vb*{\theta}_{i+1} | \vb*{\theta}_i) - \log \hat P(\vb*{\theta}_{i+1} | \vb*{\theta}_i)}] &\le \frac{1}{T_i}\E\qty[\abs{ \hat{f}_{i+1} - \hat{f}_i - f_{i+1} + f_i}],  \\
    &\le \frac{1}{T_i}\sqrt{\Var\qty[\hat{f}_{i+1} - \hat{f}_{i}]}, \\
    &\le \frac{1}{T_i}\sqrt{\Var\qty[\hat{f}_{i+1}] + \Var\qty[\hat{f}_{i}]} .
    \end{aligned}
\end{equation}

Note that the estimators should be unbiased, otherwise the equation above will not hold. Finally we will introduce the condition below, that is sufficient for the equation above and furthermore to bound KL divergence by $\eta$,
\begin{equation}
    \label{eq:intellig_condition}
    \MSE[f_{i+1}] \le \frac{\eta^2 T^2_i}{2}.
\end{equation}

This is a more efficient condition for the estimator in comparison to the simply asking $\MSE[f_{i+1}] \le E$. In order to compare the performance of the simulated annealing with and without the sensitivity analysis, we conducted three experiments as follows,
\begin{itemize}
    \item \textbf{Simple Optimizer (1)}: A simulated annealing optimizer with the condition $\MSE[f_{i+1}] \le E$ with a high value for $E$.
    \item \textbf{Simple Optimizer (2)}: A simulated annealing optimizer with the condition $\MSE[f_{i+1}] \le E$ with a low value for $E$.
    \item \textbf{Error-Aware Optimizer}: A simulated annealing optimizer with Equation \ref{eq:intellig_condition} as the condition.
\end{itemize}

For experimental studies, consider the benchmark problem defined in \ref{prob:toy}.

\begin{prob}[Benchmark problem]\label{prob:toy}
    Assume a variational task with one qubit and $U(\theta) = R_x(\theta)$ and $O = Z$ with $\mathcal{C} = I$, which implies $\ell = 1$ and $m = 1$. Also $C(\theta) = \ev{R_x^\dagger(\theta) Z R_x^\dagger(\theta)}{0}$ could be simplified further into $\cos\theta$.
\end{prob}

We start with an ensemble of $\theta$s near $0$ and compare the distribution of the exact value of the function $f$ through the optimization (with respect to the number of shots conducted) for each optimizer. The results are shown in Figure \ref{fig:sa}.

\begin{figure}[h]
    \centering
    \includegraphics[width=0.8\textwidth]{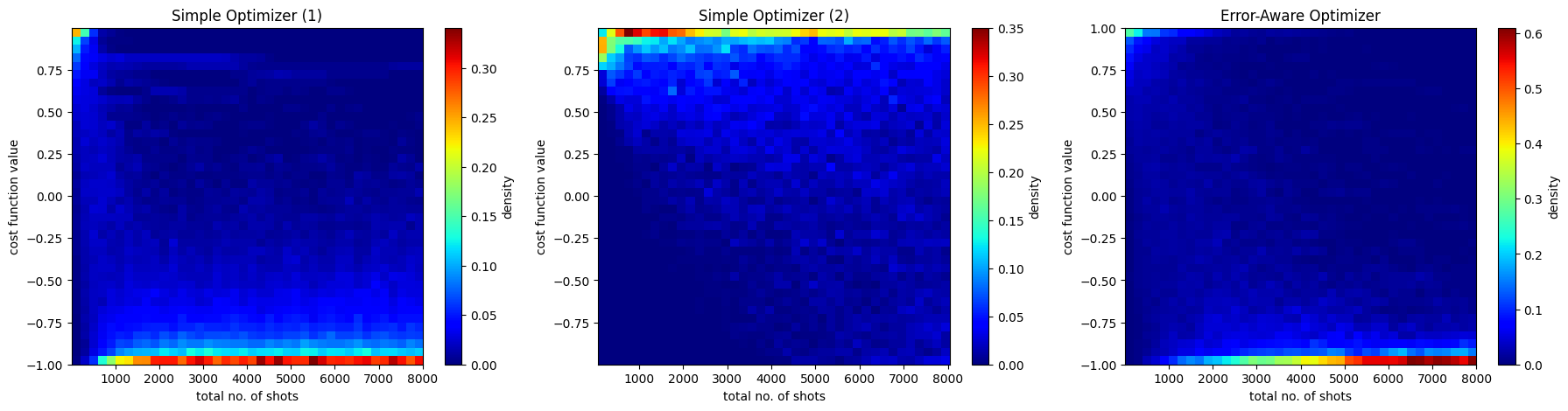}
    \caption{Comparison of the performance of the error-aware simulated annealing with the simpler ones.}
    \label{fig:sa}
\end{figure}

To more clearly highlight the difference between the distributions, we have also plotted the distribution of data points after $7000$ shots for each optimizer in Figure \ref{fig:sa_dist}.

\begin{figure}[h]
    \centering
    \includegraphics[width=0.5\textwidth]{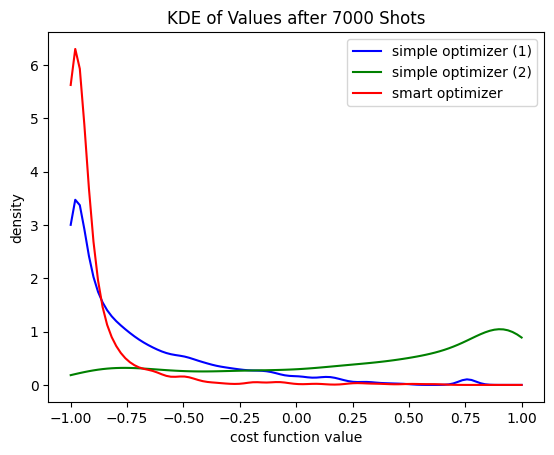}
    \caption{Distribution of datapoints after $7000$ shots for each optimizer.}
    \label{fig:sa_dist}
\end{figure}

Note that the error bound for different optimizers as a function of the number of shots is shown in Figure \ref{fig:sa_error} which is just a visualisation of  condition \ref{eq:intellig_condition}.

\begin{figure}[h]
    \centering
    \includegraphics[width=0.5\textwidth]{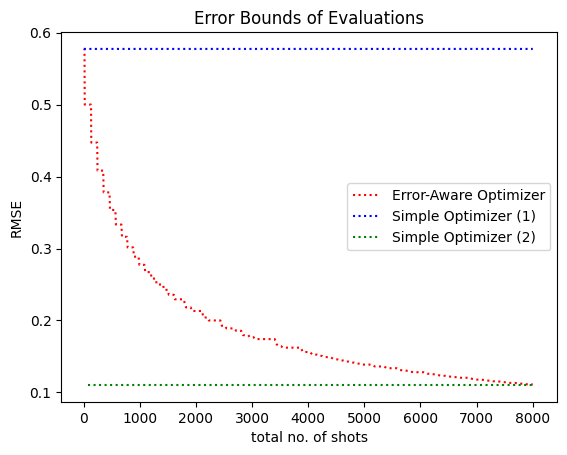}
    \caption{Error bound for different optimizers as a function of the number of shots.}
    \label{fig:sa_error}
\end{figure}

The results show that the error-aware simulated annealing is able to find a better solution with less number of shots.

\subsection{Example II: Recursive Estimator for Gradient Descent}

To illustrate the flexibility of the framework with respect to the choice of estimators and optimizers, in this section we perform experiments with a standard gradient descent algorithm and a novel recursive estimator for the function and its derivative. The proposed recursive estimator works on the assumption that the distance between two function evaluations required by the optimizer at two consecutive iterations is not great. That is, the function (and possibly its gradient) at a point $\vb*{\theta}_i$ and its next evaluation at $\vb*{\theta}_{i+1}$ doesn't differ drastically from $\vb*{\theta}_i$. This assumption allows the update rule of the optimizer to be written in the form $\vb*{\theta}_{i+1} = \vb*{\theta}_i + \delta \vb*{\theta}_i$ where $\delta \vb*{\theta}_i$ is a vector with bounded norm. The proposed recursive estimation methodology is formally defined in Definition \ref{def:sofr_estimator}.

\begin{defn}[Recursive Estimators]
    \label{def:sofr_estimator}
    \begin{equation}
        \begin{cases}
        \hat{f}^{*}_{i} = \alpha_i(\hat{f}^{*}_{i-1} + \delta\vb*{\theta}_{i-1} \cdot \hat{\grad}f^{*}_{i-1}) + (1 - \alpha_i) \hat{f}_{i} \\
        \hat{\partial_k}f^{*}_{i} = \beta_i \hat{\partial_k}f^{*}_{i-1} + (1 - \beta_i) \hat{\partial_k}f_{i}
        \end{cases}
        , \quad
        \begin{cases}
            \hat{f}^{*}_0 = \hat{f}_0 \\
            \hat{\partial_k}f^{*}_0 = \hat{\partial_k}f_0
        \end{cases}
    \end{equation}
\end{defn}

Note that $\alpha_i$s and $\beta_i$s are values between $0$ and $1$ and act as hyperparameters which control the relative weight given to prior knowledge. The optimal values of these parameters are derives in later sections. First we present Theorem \ref{thrm:sofr_bound} which derives theoretical bounds for the bias and variance of the estimate so obtained.

\begin{thrm}[Recursive estimator bounds]\label{thrm:sofr_bound}
    For any $i$,
    \begin{equation}
        \begin{cases}
            \Bias[\hat{f}^{*}_{i}] \le B_i \\
            \Bias[\hat{\partial_k} f^{*}_{i}] \le B_{\partial_k, i}.
        \end{cases}
    \end{equation}
    Where $B_i$ and $B_{\partial_k, i}$ are calculated recursively as follows,
    \begin{equation}
        \begin{cases}
            B_i = \alpha_{i}\qty(B_{i-1} + \sum_{k=1}^m \abs{\qty(\delta\vb*{\theta}_{i-1})_k} B_{\partial_k, i-1} + \frac{m}{2}\norm{\delta\vb*{\theta}_{i-1}}_2^2 \norm{O}_2) \\
            B_{\partial_k, i} = \beta_{k,i} \qty(B_{\partial_k, i-1} + \norm{\delta\vb*{\theta}_{i-1}}_2 \norm{O}_2)
        \end{cases}, \quad
        \begin{cases}
            B_0 = 0 \\
            B_{\partial_k, 0} = 0.
        \end{cases}
    \end{equation}
    and similarly for the variance,
    \begin{equation}
        \begin{cases}
        \Var[\hat{f}^*_i] \le A^2_i \\
        \Var[\hat{\partial_k}f^*_i] \le A^2_{\partial_k, i}.
        \end{cases}
    \end{equation}
    Using the notation in, Theorem~\ref{thrm:naive_bound}
    \begin{equation}
        \begin{cases}
        A^2_i = \alpha_i^2 \qty(A^2_{i-1} + \sum_{k=1}^m \abs{\qty(\delta\vb*{\theta}_{i-1})_k}^2 A^2_{\partial_k, i-1})  + \qty(1 - \alpha_i)^2 \epsilon^2_{f_i} \\
        A^2_{\partial_k, i} = \beta_{k,i}^2 A^2_{\partial_k, i-1} + \qty(1 - \beta_{k,i})^2 \epsilon^2_{\partial_k f_i},
        \end{cases}
    \end{equation}
\end{thrm}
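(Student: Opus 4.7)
The plan is to prove both sets of bounds by induction on $i$, treating bias and variance separately and handling $\hat{f}^{*}_i$ in parallel with $\hat{\partial_k} f^{*}_i$. The base case $i=0$ is immediate: $\hat{f}^{*}_0 = \hat{f}_0$ and $\hat{\partial_k} f^{*}_0 = \hat{\partial_k} f_0$ are unbiased sample-mean estimators, so $B_0 = B_{\partial_k, 0} = 0$, and their variances are already controlled via Theorem~\ref{thrm:naive_bound}.

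For the bias recursion I would first take expectations of the defining recursion. Using that the base sample-mean estimators satisfy $\E[\hat{f}_i] = f_i$ and $\E[\hat{\partial_k} f_i] = \partial_k f_i$, one gets $\E[\hat{f}^{*}_i] - f_i = \alpha_i\bigl(\E[\hat{f}^{*}_{i-1}] - f_i + \delta\vb*{\theta}_{i-1} \cdot \E[\hat{\grad} f^{*}_{i-1}]\bigr)$. The trick is to replace $f_i$ by its second-order Taylor expansion around $\vb*{\theta}_{i-1}$, so that the linear piece cancels against $\delta\vb*{\theta}_{i-1}\cdot\grad f_{i-1}$ and what remains is $\Bias[\hat{f}^{*}_{i-1}] + \delta\vb*{\theta}_{i-1}\cdot \Bias[\hat{\grad} f^{*}_{i-1}] - \tfrac12 \delta\vb*{\theta}_{i-1}^{T} \mathrm{Hess}\, f(\xi)\,\delta\vb*{\theta}_{i-1}$. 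Triangle inequality together with Lemma~\ref{lemm:derivative_bound} for the Hessian remainder gives the stated recursion for $B_i$. The companion bound $B_{\partial_k,i}$ follows in exactly the same way but with a first-order Taylor expansion of $\partial_k f$: the directional Lipschitz constant of $\partial_k f$ can be bounded in terms of $\norm{O}_2$ by applying the parameter-shift rule to each mixed second partial $\partial_j\partial_k f$.

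For the variance recursion I would exploit the structural independence of the shot data: at iteration $i$ the base estimators $\hat{f}_i$ and $\hat{\partial_k}f_i$ are computed from fresh measurements and are therefore independent of everything produced at iterations $\le i-1$. This kills the cross term between $\alpha_i\bigl(\hat{f}^{*}_{i-1}+\delta\vb*{\theta}_{i-1}\cdot\hat{\grad}f^{*}_{i-1}\bigr)$ and $(1-\alpha_i)\hat{f}_i$. Within the bracket, because the shots feeding $\hat{f}_j$ are distinct from those feeding each $\hat{\partial_k}f_j$ and from each other across different $k$, the recursion propagates pairwise uncorrelatedness of $\hat{f}^{*}_{i-1}$ with the various $\hat{\partial_k}f^{*}_{i-1}$. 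The variance of the bracketed sum then decomposes as $\Var[\hat{f}^{*}_{i-1}] + \sum_{k} (\delta\vb*{\theta}_{i-1})_k^2 \Var[\hat{\partial_k}f^{*}_{i-1}]$, and applying the inductive hypothesis together with the $\epsilon_{f_i}$ bound of Theorem~\ref{thrm:naive_bound} for the fresh term yields the stated recursion for $A_i^2$. The bound on $A_{\partial_k,i}^2$ is a direct specialization of the same argument with the simpler recursion.

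The hardest step is the independence/covariance bookkeeping in the variance analysis: the recursive definitions entangle $\hat{f}^{*}$ and the $\hat{\partial_k} f^{*}$'s through all past iterations, so one must be explicit about which shots are consumed by which estimator in order to justify that the cross-covariances vanish at every step. By comparison, the bias recursion is essentially a Taylor expansion combined with Lemma~\ref{lemm:derivative_bound}, and the induction on $i$ itself is entirely routine.
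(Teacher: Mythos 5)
Your route is essentially the paper's: your second-order Taylor expansion of $f_i$ around $\vb*{\theta}_{i-1}$ is exactly the paper's drift term $d_i = f_{i-1} + \delta\vb*{\theta}_{i-1}\cdot\grad f_{i-1} - f_i$, bounded the same way via Lemma~\ref{lemm:derivative_bound} and the mean value theorem (and likewise $d_{\partial_k,i}$ for the gradient bias), and your variance recursion uses the same two ingredients as the paper, namely independence of the fresh shots at step $i$ and the bounds $\Var[\hat f_i]\le\epsilon^2_{f_i}$, $\Var[\hat\partial_k f_i]\le\epsilon^2_{\partial_k f_i}$ from Theorem~\ref{thrm:naive_bound}. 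The base case and the induction are as routine as you say.

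The one claim you make that does not hold as stated is that ``the recursion propagates pairwise uncorrelatedness of $\hat f^*_{i-1}$ with the various $\hat\partial_k f^*_{i-1}$.'' It does not: although $\hat f^*_0$ and $\hat\partial_k f^*_0$ are built from disjoint shots, already at the next step
\begin{equation}
    \mathrm{Cov}\qty[\hat f^*_1,\, \hat\partial_k f^*_1] \;=\; \alpha_1\,\beta_{k,1}\,\qty(\delta\vb*{\theta}_0)_k\,\Var\qty[\hat\partial_k f_0],
\end{equation}
because $\hat f^*_1$ absorbs $\hat\grad f^*_0$ through the term $\delta\vb*{\theta}_0\cdot\hat\grad f^*_0$ while $\hat\partial_k f^*_1$ retains $\hat\partial_k f^*_0$ through its own recursion. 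Consequently, from $i=2$ onward the variance of the bracketed sum acquires cross terms $2\sum_k(\delta\vb*{\theta}_{i-1})_k\,\mathrm{Cov}[\hat f^*_{i-1},\hat\partial_k f^*_{i-1}]$ that your decomposition (and hence the stated recursion for $A_i^2$) does not account for, and these can have either sign. To be fair, the paper's own proof writes the variance recursion as an exact equality and thereby makes the same tacit assumption without comment, so your attempt is on par with it; but the justification you offer is not valid as written, and a fully rigorous argument would have to either track these covariances explicitly in the recursion or modify the estimator (e.g.\ use fresh, disjoint shot data for the value and gradient chains in a way that decouples them) so that the cross terms provably vanish. Your remaining independence claims --- fresh shots at step $i$ independent of the past, and $\hat\partial_j f^*$ versus $\hat\partial_k f^*$ for $j\neq k$ built from disjoint shots --- are fine.
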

\begin{proof}

Defining the drift term $d_i = f_{i - 1} + \delta\vb*{\theta}_{i-1} \cdot \grad f_{i-1} - f_{i}$, we can write the bias and variance of $\hat{f}^*_i$ as,
\begin{align}
    \Bias\qty[\hat{f}^*_i] &= \alpha_i \qty(\Bias\qty[\hat{f}^*_{i-1}] + \delta\vb*{\theta}_{i-1} \cdot \Bias\qty[\hat{\grad}f^*_{i-1}] + d_i) \\
    \Var\qty[\hat{f}^*_i] &= \alpha_i^2 \qty(\Var\qty[\hat{f}^*_{i-1}] + \delta\vb*{\theta}_{i - 1}^2\cdot\Var\qty[\hat{\grad}f^*_{i-1}]) + \qty(1 - \alpha_i)^2 \Var\qty[\hat{f}_i].
\end{align}
In an abuse of notation, $\delta\vb*{\theta}^2_{i-1}$ represents a vector of squared elements and $\Var\qty[\hat{\grad}f^*_{i-1}]$ represents a vector of variances. This facilitates a more compact proof as shall be seen. With the same objective, we define another drift term for the derivatives of $f$ as $d_{\partial_k, i} = \partial_k f_{i - 1} - \partial_k f_{i}$ will helps us to write the bias and variance of $\hat{\partial_k}f^*_i$ as,
\begin{align}
    \Bias\qty[\hat{\partial_k}f^*_i] &= \beta_{k,i} \qty(\Bias\qty[\hat{\partial_k}f^*_{i-1}] + d_{\partial_k, i}) \\
    \Var\qty[\hat{\partial_k}f^*_i] &= \beta_{k,i}^2 \Var\qty[\hat{\partial_k}f^*_{i-1}] + \qty(1 - \beta_{k,i})^2 \Var\qty[\hat{\partial_k}f_i].
\end{align}

Combining Lemma~\ref{lemm:derivative_bound} with the mean value theorem, we have,
\begin{equation}
\begin{cases}
    \abs{d_i} \le \frac{1}{2} \norm{\delta \vb*{\theta}_{i-1}}_2^2 m \norm{O}_2 \\
    \abs{d_{\partial_k, i}} \le \norm{\delta \vb*{\theta}_{i-1}}_2 \norm{O}_2.
\end{cases}
\end{equation}

Finally, combining the above equations with the fact that $\Var[\hat{f}_i] \le \epsilon^2_{f_i}$ and $\Var[\hat{\partial_k} f_i] \le \epsilon^2_{\partial_k f_i}$ completes the proof.
\end{proof}

For the confidence interval of recursive estimator, we can prove the following result,

\begin{coro}[Confidence Interval]
    As a result of Theorem~\ref{thrm:sofr_bound} the following equation is valid for $s^*$ is any of $f_i$s or $\partial_k f_i$s, simply by setting corresponding $A$ and $B$s.
    \begin{equation}
        \MSE[\hat{s}^*] \le B^2 + A^2, \quad \Pr(\abs{\hat{s}^* - f} > \kappa A + B) \le 2e^{-\frac{\kappa^2}{2}}.
    \end{equation}
\end{coro}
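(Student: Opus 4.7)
The plan is to combine the standard bias--variance decomposition with a Hoeffding-type concentration argument, reusing the inductive bounds already furnished by Theorem~\ref{thrm:sofr_bound}. For the MSE part I would simply write
$\MSE[\hat{s}^*] = \Var[\hat{s}^*] + (\E[\hat{s}^*] - s)^2$
and bound the two terms by $A^2$ and $B^2$ respectively via that theorem.

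For the confidence interval, the first step is to peel off the bias with the triangle inequality: since $\abs{\E[\hat{s}^*] - s} \le B$, the event $\{\abs{\hat{s}^* - s} > \kappa A + B\}$ is contained in $\{\abs{\hat{s}^* - \E[\hat{s}^*]} > \kappa A\}$, so it suffices to prove $\Pr\qty(\abs{\hat{s}^* - \E[\hat{s}^*]} > \kappa A) \le 2 e^{-\kappa^2/2}$. The core step is then to unroll the recursive definition so that $\hat{s}^*$ appears as an explicit linear combination $\sum_\mu c_\mu \chi_\mu$ of the underlying bounded measurement outcomes collected across all iterations $0,\dots,i$ (including the $\pm$-shifted circuits used for gradient estimation, whose samples are drawn from distinct parameter values and are therefore independent of those feeding $\hat{f}^*_{i-1}$). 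Each $\chi_\mu$ lies in an interval of half-width $\norm{D_{j(\mu)}}_2$, so Hoeffding's inequality (Lemma~\ref{lemm:hoeffding}) yields $\Pr\qty(\abs{\hat{s}^* - \E[\hat{s}^*]} > t) \le 2\exp\qty(-t^2/(2H))$, with $H = \sum_\mu c_\mu^2 \norm{D_{j(\mu)}}_2^2$ (which by Popoviciu also upper bounds $\Var[\hat{s}^*]$).

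The remaining task is an induction on $i$ mirroring the recursion in Theorem~\ref{thrm:sofr_bound}, showing that $H$ satisfies exactly the same quadratic recursion as $A^2$. Because the samples feeding $\hat{f}^*_{i-1}$, $\hat{\grad}f^*_{i-1}$, and the fresh $\hat{f}_i$ are disjoint, both $H$ and $\Var[\hat{s}^*]$ decompose additively with the scaling coefficients $\alpha_i^2$, $\alpha_i^2 (\delta\vb*{\theta}_{i-1})_k^2$, and $(1-\alpha_i)^2$; the base case reduces to the sample mean estimator, whose Hoeffding sum is precisely $\epsilon_{f_i}$. Hence $H \le A^2$ and setting $t = \kappa A$ produces $2e^{-\kappa^2/2}$. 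The derivative estimator is treated identically, with $\beta_{k,i}$ in place of $\alpha_i$.

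The main obstacle I expect is the book-keeping for this induction: one has to verify that the coefficients obtained by unrolling the mixture of $\hat{f}^*_{i-1}$, $\delta\vb*{\theta}_{i-1}\cdot\hat{\grad}f^*_{i-1}$, and $\hat{f}_i$ really do aggregate into $H_i$ obeying the stated recursion, which in turn relies on the disjointness of the underlying samples at different iterations and different shifted parameters. This disjointness is immediate from the parameter-shift protocol, so the argument ultimately reduces to a mechanical induction running in lockstep with the variance calculation already carried out in Theorem~\ref{thrm:sofr_bound}.
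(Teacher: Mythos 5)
Your proposal follows essentially the same route as the paper: a bias--variance split for the MSE, peeling off the bias $B$ by event containment, and then applying Hoeffding's inequality to $\hat{s}^*$ viewed as a linear combination of bounded measurement outcomes, with the Hoeffding denominator tracked so that it is controlled by $A^2$. Your explicit unrolling-plus-induction is just a fleshed-out version of the paper's terse appeal to the ``one-to-one correspondence'' between the Hoeffding and Popoviciu bounds, and the sample-disjointness you invoke (between $\hat{f}^*_{i-1}$, $\hat{\partial_k} f^*_{i-1}$ and the fresh samples) is the same assumption already implicit in the variance recursion of Theorem~\ref{thrm:sofr_bound}, which the corollary takes as given.
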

\begin{proof}

While the expression for the MSE is trivial, for the confidence interval we have,
\begin{equation}
    \Pr(\abs{\hat{f}^*_i - \E[\hat{f}^*_i]} > \kappa\sqrt{A_i}) \le 2e^{-\frac{\kappa^2}{2}}.
\end{equation}
This is true because $\hat{f}^*_i$ is a linear combination of $\chi$s that are from bounded distributions. Accordingly, Hoeffding's inequality applies. Moreover, there is a one-to-one correspondence between bounds from Hoeffding's and Popoviciu's inequalities (see the proof of Theorem~\ref{thrm:naive_bound}), which obviously validates the equation above. Since $\abs{\hat{f}^*_i - f_i} > \kappa\sqrt{A_i} + B_i \Rightarrow \abs{\hat{f}^*_i - \E[\hat{f}^*_i]} > \kappa\sqrt{A_i}$,
\begin{equation}
    \Pr(\abs{\hat{f}^*_i - f_i} > \kappa\sqrt{A_i} + B_i) \le \Pr(\abs{\hat{f}^*_i - \E[\hat{f}^*_i]} > \kappa\sqrt{A_i}) \le 2e^{-\frac{\kappa^2}{2}}.
\end{equation}
\end{proof}

Finally, we need to solve the sufficient shots problem (Problem \ref{prob:sufficient_shots}) for the recursive estimator. The actual objective is to solve,
\begin{equation}
    \begin{aligned}
    \underset{r_{j, i}, r_{j\pm,i} \in \mathbb{N}, \alpha_i, \beta_{k,i}}{\mathrm{argmin}} &\sum_{i=1}^\infty \sum_{j=1}^\ell r_{j, i} + \sum_{k=1}^m r_{j+, k, i} + r_{j-, k, i} \quad \\
    &\mathrm{s.\ t.} \quad \forall i \quad \MSE[\hat{f}^*_i] \le E_f  \\
    &\mathrm{s.\ t.} \quad \forall i, k \quad  \MSE[\hat{\partial}_k f^*_i] \le E_{\partial_k f}.
    \end{aligned}
\end{equation}

However, we solve an iterative version as in Algorithm \ref{algo:error_control},
\begin{equation}
    \min_{r_j \in \mathbb{N}, \alpha_i} \sum_{j=1}^\ell r_j \quad \mathrm{s.\ t.} \quad \MSE[\hat{f}^*_i] \le E_f.
\end{equation}
\begin{equation}
    \min_{r_{j,\pm} \in \mathbb{N}, \beta_{k,i}} \sum_{j=1}^\ell r_{j+} + r_{j-} \quad \mathrm{s.\ t.} \quad \MSE[\hat{\partial}_k f^*_i] \le E_{\partial_k f}.
\end{equation}

Combining the two leads to Algorithm \ref{algo:error_control_recursive}.

\begin{algorithm}
        a) Sufficient shots for $\hat{f}^*_i$
    \begin{algorithmic}
        \Function{ShotsForREstimatorF}{$E_f$}
        \State using $A_{i-1}$ and $B_{i-1}$ from the previous evaluations
        \State using $B_{\partial_k,i-1}$s and $A_{\partial_k,i-1}$s from \textsc{ShotsForREstimatorDF}

        \State $b \gets B_{i-1} + \norm{\delta\vb*{\theta}_{i-1}}_2 B_{\partial, i-1} + \frac{m}{2}\norm{\delta\vb*{\theta}_{i-1}}_2^2 \norm{O}_2$
        \State $a \gets A_{i-1} + \norm{\delta\vb*{\theta}_{i-1}}_2^2 A_{\partial, i-1}$

        \If {$b^2 + a > E_f$}
        \State $E' \gets \qty(b^2 + a) E_f/\qty(b^2 + a - E_f)$
        \State ($r_j$s, $\epsilon$) $\gets$ \Call{ShotsForSMEstimatorF}{$E'$}
        \Else
            \State $\epsilon \gets \infty$
            \For {$j=1$ to $l$}
            \State $r_j \gets 0$
        \EndFor
        \EndIf
        \State $\alpha_i \gets \qty(b^2 + a)/\qty(b^2 + a - E_f)$
        \State $A_i \gets \alpha_i^2 a + (1-\alpha_i)^2 \epsilon$
        \State $B_i \gets \alpha_i b$
        \State \Return ($r_j$s, $B_i^2 + A_i$)
        \EndFunction
    \end{algorithmic}
        b) Sufficient shots for $\hat{\partial}_k^*f_i$
    \begin{algorithmic}
        \Function{ShotsForREstimatorDF}{$E_{\partial_k f}$}
        \State using $A_{\partial_k,i-1}$ and $B_{\partial_k, i-1}$ from the previous evaluations

        \State $b \gets B_{\partial_k, i-1} + m\norm{\delta\vb*{\theta}_{i-1}}_2 \norm{O}_2$
        \State $a \gets A_{\partial_k, i-1} $

        \If {$b^2 + a > E_{\partial_k f}$}
            \State $E' \gets \qty(b^2 + a) E_{\partial_k f}/\qty(b^2 + a - E_{\partial_k f})$
            \State ($r_{j\pm}$s, $\epsilon$) $\gets$ \Call{ShotsForSMEstimatorDF}{$E'$}
        \Else
            \State $\epsilon \gets \infty$
            \For {$j=1$ to $l$}
            \For {$\sigma$ in $\{+, -\}$}
            \State $r_{j\sigma} \gets 0$
            \EndFor
        \EndFor
        \EndIf

        \State $\beta_{k,i} \gets \qty(b^2 + a)/\qty(b^2 + a - E_f)$
        \State $A_{\partial_k,i} \gets \beta_{k,i}^2 a + \beta_{k,i}^2 \epsilon$
        \State $B_{\partial_k,i} \gets \beta_{k,i} b$
        \State \Return ($r_{j\sigma}$s, $B_{\partial_k,i}^2 + A_{\partial_k,i}$)
        \EndFunction
    \end{algorithmic}
    \caption{Error control of recursive estimators}
    \label{algo:error_control_recursive}
\end{algorithm}

\begin{remark}
  Note that with this algorithm, for the same error bound, the number of shots for a recursive estimator of a function will be at max equal to the number of shots for the naive estimator of that function.
\end{remark}

To illustrate the performance of Algorithm \ref{algo:error_control_recursive}, first we apply the estimator for the variational Problem \ref{prob:toy} with a random (zero mean) initial point and a simple gradient-descent optimizer. Figure~\ref{fig:simple_compare} shows the estimated values (with CIs) of the loss function, for different estimators, as a function of the number of shots used to evaluate the function.

\begin{figure}[h]
    \label{fig:simple_compare}
    \centering
    \includegraphics[width=0.5\textwidth]{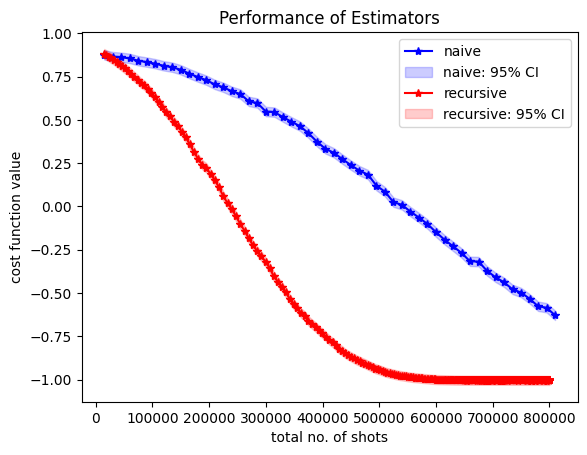}
    \caption{estimated loss value vs. number of shots, for a simple GD optimizer equipped with each estimator}
\end{figure}

It is evident that the proposed recursive estimator is outperforming the sample mean estimator by a significant margin. Another comparison made by visualizing number of shots per each GD iteration is shown in Figure~\ref{fig:agility_compare}.

\begin{figure}[h]
    \centering
    \includegraphics[width=0.5\textwidth]{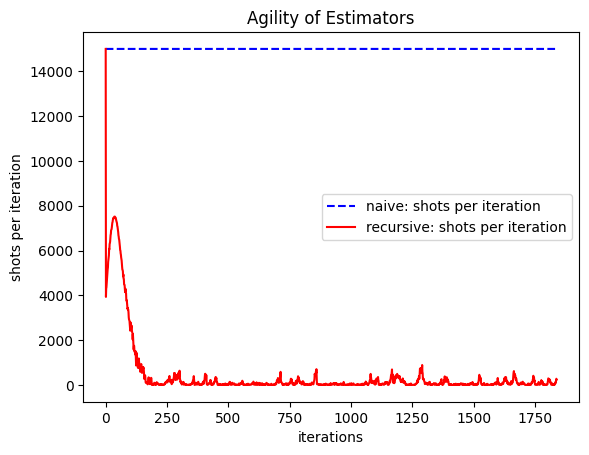}
    \caption{number of shots per each GD iteration for each of those estimators}
    \label{fig:agility_compare}
\end{figure}

To verify the theoretical results derived earlier, the bounds on MSE and CI are compared with the actual values of the MSE and CI of the estimators in Figures \ref{fig:mse_compare} and \ref{fig:ci_compare} respectively.

\begin{figure}[h]
    \centering
    \includegraphics[width=0.5\textwidth]{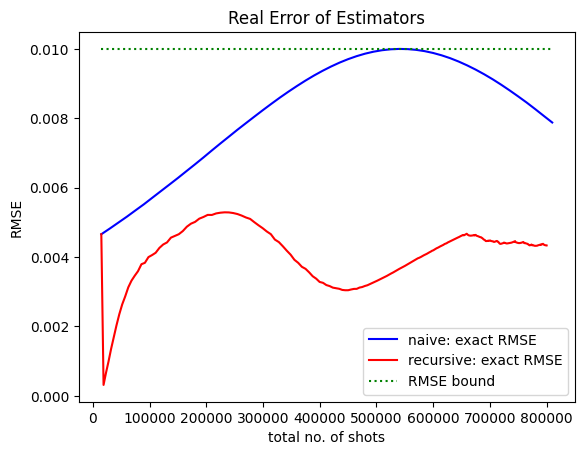}
    \caption{Exact MSE values vs Bounded MSE values}
    \label{fig:mse_compare}
\end{figure}

\begin{figure}[h]
    \centering
    \includegraphics[width=0.5\textwidth]{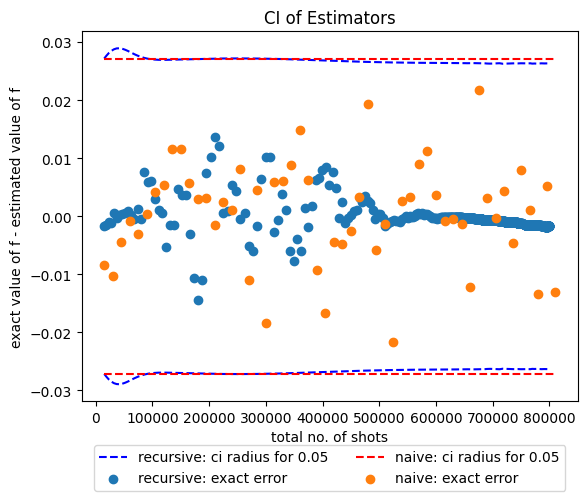}
    \caption{CI bounds and the difference between exact value and the estimated value of the function}
    \label{fig:ci_compare}
\end{figure}

For further experimental verification, the same experiment has also been carried out on the more complex MaxCut problem for a square graph ($\abs{V} = 4$ and $\abs{E} = 4$). The results are shown in Figure~\ref{fig:maxcut_compare} and Figure~\ref{fig:maxcut_agility_compare}.

\begin{figure}[h]
    \centering
    \includegraphics[width=0.5\textwidth]{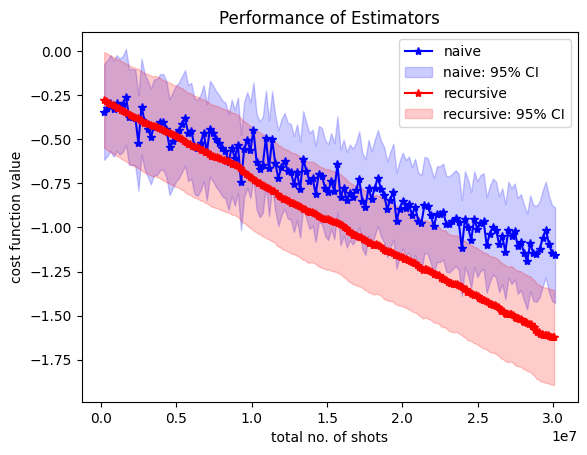}
    \caption{loss function vs. number of shots, for a simple GD optimizer equipped with each estimator}
    \label{fig:maxcut_compare}
\end{figure}

\begin{figure}[h]
    \centering
    \includegraphics[width=0.5\textwidth]{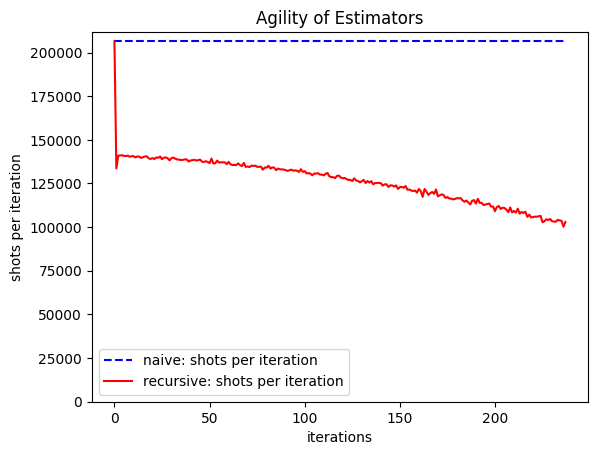}
    \caption{number of shots per each GD iteration for each of those estimators}
    \label{fig:maxcut_agility_compare}
\end{figure}

\section{Concluding remarks}\label{sec:conclusion}

In this paper, a generalized framework for optimization algorithms which seek to reduce shot-number evaluations in VQAs was proposed. In the general form, the proposed framework entails a combination of an estimator together with a numerical optimization algorithm. We introduced the sufficient shots problem and proposed an algorithm for it to be used with the sample mean estimator. This concept together with sensitivity analysis of optimizers, allows us to control the number of shots leading to a more natural and effective optimization process.

Two specific case studies of this framework were subject to extensive experiments. In the first case, a sample mean estimator is coupled with a simulated annealing optimizer, and in the second case, a recursive estimator was coupled with a gradient descent optimizer. In both cases we demonstrated that the proposed approach achieves significant performance improvements over conventional methods.

Our results highlight the importance of considering error control strategies and incorporating them into the design of optimizers for variational quantum algorithms. By leveraging estimators with error control and integrating them with interactive optimization processes, we can achieve better optimization performance and reduce the resource requirements for quantum computations.

Overall, this work contributes to advancing the field of variational quantum algorithms by providing a systematic framework for designing error-aware optimizers. The presented approaches and results open up new possibilities for improving the efficiency and effectiveness of quantum computing research in various domains, such as quantum chemistry, combinatorial optimization, and machine learning. Future directions could explore further extensions and applications of the proposed framework, as well as experimental validations on quantum devices.

\printbibliography

\section*{Appendix}

\begin{table}[!ht]
    \begin{tabular}{p{0.13\linewidth} | p{0.84\linewidth}}
        \textbf{Symbol} & \textbf{Description} \\
        \hline\hline
        $i$ & Iterations of the optimizer (except for Lemma \ref{lemm:hoeffding}) \\
        $j$ & Index of terms of the observable $O$ (from $1$ to $\ell$)\\
        $k$ & Index of dimensions of the parameter space (from $1$ to $m$) \\
        $l$ & Index of the shots\\
        $\sigma$ & Index for $+$ and $-$ \\
        $m$ & Dimension of the parameter space \\
        $\ell$ & Number of the terms in the decomposition of observable $O$ \\
        $r_j$, $r_{j, i}$, $r_{j \sigma, i}$, $r_{i\sigma, i}$ & Number of the shots \\
        \hline
        $\ket{0}$ & Initial state of the quantum circuit \\
        $\vb*{\theta}$ & Parameter vector \\
        $U(\theta)$ & Parameterized quantum circuit \\
        $O$ & Observable \\
        $\mathcal{C}$ & Cost function \\
        $f$ & The expectation value of the measurement \\
        $\vu{e}_k$ & Unit vector in the $k$-th dimension of parameter space \\
        $V_j$ & The gate for the $j$-th term of the observable $O$ \\
        $D_j$ & The measurement operator for the $j$-th term of the observable $O$ \\
        $\chi_{j, l}, \chi_{j \sigma, l}$ & The outcome of the measurement for the value or gradient terms \\
        $X($U$, $O$, \theta)$ & The distribution of the outcome \\
        $\xi_i$,$t$ & Only used in Hoefding's inequality (Lemma \ref{lemm:hoeffding}) \\
        $\epsilon_f$, $\epsilon_{\partial_k f}$ & The error bound for the estimators, defined in Theorem \ref{thrm:naive_bound} \\
        $s$ & a dummy function defined and used in Theorem \ref{thrm:naive_bound} \\
        $\kappa$ & A degree of freedom in the confidence interval \\
        $E_f$ & The requested error bound for a estimator, defined in Problem \ref{prob:sufficient_shots} \\
        $\nu$ & Dummy variable defined as $\qty(\sum_{j=1}^\ell \norm{D_j}) / E_f$ used in Algorithm \ref{algo:error_control} \\
        $T_i$ & The temperature of simulated annealing optimizer at the $i$-th step \\
        $\eta$ & The desired upper bound for the KL divergence \\
        $\alpha_i$, $\beta_i$ & The mixing ratio of values and gradients in the recursive estimator \\
        $d_i$, $d_{\partial_k, i}$ & Drift term, a dummy variable used the proof of Theorem \ref{thrm:sofr_bound} \\
        $A$ & STD bound for the recursive estimator \\
        $B$ & Bias bound for the recursive estimator \\
        $a$, $b$, $E'$ & Dummy variables used in Algorithm \ref{algo:error_control_recursive} \\
        \hline
        $\partial_k$ & Partial derivative with respect to the $k$-th dimension\\
        $\gradient$ & Gradient operator \\
        $\mathrm{Hess}$ & Hessian operator \\
        $\norm{\cdot}_2$ & $L^2$ norm \\
        $D_{KL}$ & Kullback-Leibler divergence \\
        $\Pr$ & Probability distribution \\
        $\hat{\cdot}$ & An estimator \\
    \end{tabular}
    \caption{Table of Notations}
    \label{tab:notations}
\end{table}

\end{document}